\documentclass[12pt,reqno]{article}

\usepackage[usenames]{color}
\usepackage{amssymb}
\usepackage{amsmath}
\usepackage{amsthm}
\usepackage{amsfonts}
\usepackage{amscd}
\usepackage{graphicx}
\usepackage{xcolor}

\usepackage[colorlinks=true,
linkcolor=webgreen,
filecolor=webbrown,
citecolor=webgreen]{hyperref}

\definecolor{webgreen}{rgb}{0,.5,0}
\definecolor{webbrown}{rgb}{.6,0,0}

\usepackage{color}
\usepackage{fullpage}
\usepackage{float}

\usepackage{graphics}
\usepackage{latexsym}
\usepackage{epsf}
\usepackage{breakurl}
\usepackage{fullpage}

\newcommand{\seqnum}[1]{\href{https://oeis.org/#1}{\rm \underline{#1}}}

\DeclareMathOperator{\feq}{feq}
\DeclareMathOperator{\feqc}{feqc}
\DeclareMathOperator{\either}{either}
\DeclareMathOperator{\consec}{consec}
\DeclareMathOperator{\ab}{ab}
\DeclareMathOperator{\first}{first}
\DeclareMathOperator{\afirst}{afirst}
\DeclareMathOperator{\abpat}{abpat}
\DeclareMathOperator{\bapat}{bapat}

\newenvironment{smallarray}[1]
{\null\,\vcenter\bgroup\scriptsize
\arraycolsep=.13885em
\hbox\bgroup$\array{@{}#1@{}}}
{\endarray$\egroup\egroup\,\null}

\begin{document}

\theoremstyle{plain}
\newtheorem{theorem}{Theorem}
\newtheorem{corollary}[theorem]{Corollary}
\newtheorem{lemma}[theorem]{Lemma}
\newtheorem{proposition}[theorem]{Proposition}

\theoremstyle{definition}
\newtheorem{definition}[theorem]{Definition}
\newtheorem{example}[theorem]{Example}
\newtheorem{conjecture}[theorem]{Conjecture}

\theoremstyle{remark}
\newtheorem{remark}[theorem]{Remark}

\title{Intertwining of Complementary Thue-Morse Factors}

\author{Jeffrey Shallit\\
School of Computer Science\\
University of Waterloo\\
Waterloo, ON  N2L 3G1\\
Canada\\
\href{mailto:shallit@uwaterloo.ca}{\tt shallit@uwaterloo.ca}}

\maketitle

\begin{abstract}
We consider the positions of
occurrences of a factor $x$ and its binary
complement $\overline{x}$ in the Thue-Morse word
${\bf t} = {\tt 01101001} \cdots$, and show
that these occurrences are ``intertwined'' in essentially
two different ways.   Our proof method consists of stating
the needed properties as a first-order logic formula $\varphi$,
and then using a theorem-prover to prove $\varphi$.
\end{abstract}

\section{Introduction}

The Thue-Morse sequence ${\bf t} = t_0 t_1 t_2 \cdots = {\tt 01101001} \cdots$
is a famous binary sequence with many interesting properties
\cite{Allouche&Shallit:1999}.  
In this short note we prove yet
another in a long list of such properties, this time concerning
complementary factors.

A {\it factor\/} of an infinite word $\bf w$ is a contiguous block sitting
inside $\bf w$.  In this paper we will only be concerned with 
factors of finite length.   Define $\overline{\tt 0} = {\tt 1}$ and
$\overline{\tt 1} = {\tt 0}$, and extend this notion to words in the
obvious way, so that if $w = a_1 a_2 \cdots a_n$, then
$\overline{w} = \overline{a_1} \, \overline{a_2} \, \cdots \, 
\overline{a_n}$.   We say two binary words $x, y$ are {\it complementary}
if $x = \overline{y}$.   Thus, for example, {\tt 0110} and {\tt 1001} are
complementary.

It is well known that the Thue-Morse word $\bf t$ is {\it recurrent}, that is,
every factor that occurs, occurs infinitely often (first observed by
Morse \cite{Morse:1921}).  Further, it is
complement-invariant:  if a factor $x$ occurs in $\bf t$, then so does its
binary complement $\overline{x}$.   This suggests looking
at the positions of the occurrences of $x$ and $\overline{x}$ in $\bf t$.

For example,
let us do this for the complementary factors ${\tt 00}$
and ${\tt 11}$, marking the occurrences of ${\tt 00}$ in 
\textcolor{red}{red}
and ${\tt 11}$ in \textcolor{blue}{blue}:
$$ {\tt 0} \textcolor{blue}{\tt 11} {\tt 01} \textcolor{red}{\tt 00} \textcolor{blue}{\tt 11} 
\textcolor{red}{\tt 00} {\tt 10} \textcolor{blue}{\tt 11} {\tt 01} \textcolor{red}{\tt 00} {\tt 1} \cdots$$
Seeing this, it is natural to conjecture that occurrences of {\tt 11} and
{\tt 00} strictly alternate in $\bf t$, a conjecture that is not hard
to prove and appears in \cite{Bernhardt:2009}.
 Spiegelhofer
\cite[Lemma 2.10]{Spiegelhofer:2021} handled the case of {\tt 01} and {\tt 10}.

However, strict alternation, as in this example, is not the only possibility
for other factors.
If we consider the complementary factors {\tt 01101} and {\tt 10010} instead,
then the occurrences behave differently:
$$ \textcolor{red}{\tt 01101} {\tt 001100} \textcolor{blue}{\tt 10110} {\tt 100}
\textcolor{blue}{\tt 10110} \textcolor{red}{\tt 01101} {\tt 001} \cdots .$$
If we write $\tt A$ for an occurrence of {\tt 01101} and $\tt B$ for
an occurrence of its complementary factor, then experiments quickly
lead to the conjecture that these factors occur in the repeating
pattern ${\tt (ABBA)}^\omega = {\tt ABBA ABBA ABBA } \cdots $.

In this paper we prove that the two patterns ${\tt (AB)}^\omega$ and
${\tt (ABBA)}^\omega$ are essentially the only nontrivial possibilities
for intertwining of complementary factors.

For a deep study of the gaps between successive
occurrences of factors in $\bf t$, see the recent
paper of Spiegelhofer \cite{Spiegelhofer:2021}.

\section{The main theorem}

Let $x$ be a finite, nonempty factor of the Thue-Morse word $\bf t$.
Consider all occurrences of $x$ and $\overline{x}$ in
$\bf t$  and identify their starting positions, writing
$\tt A$ for an occurrence of $x$ and $\tt B$ for an occurrence
of $\overline{x}$.   (An occurrence of $x$ may overlap that of $\overline{x}$.)
Call the resulting infinite sequence of {\tt A}'s and {\tt B}'s the
{\it intertwining sequence} of $x$, and write it as $I(x)$.

The following is our main result.
\begin{theorem}
The only possibilities for $I(x)$ are as follows:
\begin{enumerate}
\item ${\tt ABBABAABBAABABBA}\cdots $, which is the Thue-Morse word itself
under the coding ${\tt 0} \rightarrow {\tt A}$,
${\tt 1} \rightarrow {\tt B}$;
\item ${\tt BAABABBAABBABAAB}\cdots$, which is the Thue-Morse word itself
under the coding ${\tt 0} \rightarrow {\tt B}$,
${\tt 1} \rightarrow {\tt A}$;
\item ${\tt (AB)}^\omega$;
\item ${\tt (BA)}^\omega$;
\item ${\tt (ABBA)}^\omega$;
\item ${\tt (BAAB)}^\omega$.
\end{enumerate}
Furthermore, possibility 1 only occurs if $x = {\tt 0}$ and
possibility 2 only occurs if $x = {\tt 1}$.
\label{main}
\end{theorem}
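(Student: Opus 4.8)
The plan is to follow the paper's announced strategy: encode the whole statement as a sentence in the first-order theory of $\langle \mathbb{N},+,<\rangle$ expanded by the Thue--Morse predicate $n \mapsto t_n$, which is decidable since $\bf t$ is $2$-automatic, and then discharge it with a theorem-prover (such as the \texttt{Walnut} system). I first reduce to $|x|$. Writing $x = t_p t_{p+1} \cdots t_{p+n-1}$, so that every factor is captured by quantifying over a reference position $p$ and a length $n$, I dispose of $n=1$ by hand: if $x = {\tt 0}$ then every position is an occurrence of $x$ or $\overline{x}$, and the labels read off give exactly $\bf t$ under ${\tt 0}\to{\tt A}$, ${\tt 1}\to{\tt B}$, which is possibility~1; the case $x={\tt 1}$ is the complementary coding, giving possibility~2. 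The substance is then the claim that for $n\ge 2$ the sequence $I(x)$ is always of strict-alternation type or of ${\tt ABBA}$-type, and is never the full Thue--Morse coding.

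For the encoding I introduce predicates over positions. Let $\feq(i,j,n)$ abbreviate $\forall k\,(k<n \to t_{i+k}=t_{j+k})$ (equal length-$n$ factors) and $\feqc(i,j,n)$ abbreviate $\forall k\,(k<n \to t_{i+k}\neq t_{j+k})$ (complementary factors). Then $i$ carries label ${\tt A}$ exactly when $\feq(i,p,n)$ and label ${\tt B}$ exactly when $\feqc(i,p,n)$; these are mutually exclusive since $x\neq\overline{x}$, and their disjunction defines $\mathrm{occ}(p,n,i)$, that $i$ starts an occurrence of $x$ or $\overline{x}$. Consecutiveness is $\consec(p,n,i,j) := \mathrm{occ}(p,n,i)\wedge \mathrm{occ}(p,n,j)\wedge i<j\wedge \neg\exists k\,(i<k<j\wedge \mathrm{occ}(p,n,k))$. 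Applying $\consec$ once or twice I build predicates $C_{\tt AA},C_{\tt BB},C_{\tt ABA},C_{\tt BAB},C_{\tt AAA},C_{\tt BBB}$ asserting that the corresponding block occurs among consecutive labels of $I(x)$. The single sentence I would submit is
$$\forall p\,\forall n\;\Big(n\ge 2 \;\longrightarrow\; \big(\neg C_{\tt AA}\wedge \neg C_{\tt BB}\big)\;\vee\;\big(\neg C_{\tt ABA}\wedge\neg C_{\tt BAB}\wedge\neg C_{\tt AAA}\wedge\neg C_{\tt BBB}\big)\Big).$$
Since the full Thue--Morse coding violates both disjuncts, verifying this simultaneously shows possibilities 1 and 2 cannot arise for $n\ge 2$.

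The local information returned by the prover is turned into global structure by two short combinatorial lemmas. If $I(x)$ avoids ${\tt AA}$ and ${\tt BB}$, consecutive labels differ, forcing $(\tt AB)^\omega$ or $(\tt BA)^\omega$. If instead $I(x)$ avoids ${\tt ABA},{\tt BAB},{\tt AAA},{\tt BBB}$, then every maximal run of equal labels has length $\le 2$ (from ${\tt AAA},{\tt BBB}$), and no \emph{interior} run can be a singleton (an interior lone ${\tt A}$ is flanked by ${\tt B}$'s, producing ${\tt BAB}$, and symmetrically), so every run after the first has length exactly $2$ and the labels form a shift of $(\tt ABBA)^\omega$. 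To pin down the shift and exclude the non-canonical ${\tt AABBAABB}\cdots$ and ${\tt BBAABBAA}\cdots$, I would have the prover verify the extra fact that for every $n\ge 2$ the first two consecutive occurrences carry different labels (the first run has length $1$); this makes the word begin ${\tt AB}$ or ${\tt BA}$ and hence equal $(\tt ABBA)^\omega$ or $(\tt BAAB)^\omega$. In each regime the choice between the two listed options is decided by the label of the leftmost occurrence, i.e.\ by whether the minimal $i$ with $\mathrm{occ}(p,n,i)$ satisfies $\feq(i,p,n)$ or $\feqc(i,p,n)$, which is again first-order.

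I expect the main obstacle to be the logical encoding, not the combinatorics. The predicate $\consec$ already hides an unbounded inner quantifier, and the block predicates nest it twice; together with two genuinely free parameters $p,n$ ranging over all of $\mathbb{N}$, the automata produced by the decision procedure can become large, and keeping the quantifier structure lean enough for the prover to terminate in practice is the delicate point. A secondary subtlety is the start-alignment step: proving that $I(x)$ is \emph{some} shift of $(\tt ABBA)^\omega$ is easy, but excluding the two shifts beginning with a doubled letter needs the separate first-occurrence check, whose minimality quantifier must be phrased so that the prover still halts.
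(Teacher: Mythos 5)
Your proposal is correct and follows essentially the same route as the paper: reduce to $|x|\ge 2$, express the alternation condition and the {\tt ABBA}-type condition on consecutive occurrences as first-order formulas over the Thue--Morse predicate, and discharge them with {\tt Walnut}; your forbidden triples $\{{\tt AAA},{\tt BBB},{\tt ABA},{\tt BAB}\}$ are exactly the complement of the paper's allowed triples $\{{\tt ABB},{\tt BBA},{\tt BAA},{\tt AAB}\}$, and your first-run-length check plays the role of the paper's {\tt abfirst}. The only cosmetic difference is that the paper packages the conditions into four separate pattern predicates ({\tt abpat}, {\tt bapat}, {\tt abbapat}, {\tt baabpat}) and verifies their disjunction for all $(i,n)$ with $n\ge 2$, rather than submitting one combined disjunctive sentence.
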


\begin{proof}
It is trivial to see the claim for $x = {\tt 0}$ and $x = {\tt 1}$.
So in what follows, we assume $|x| \geq 2$.

The idea of our proof is to write first-order logic formulas
for assertions that imply our desired results, and then use
the theorem-prover {\tt Walnut} to prove the results.
This is a strategy that has been used many times now
(see, e.g., \cite{Shallit:2022}).
For more about {\tt Walnut}, see \cite{Mousavi:2016}.

We describe the formulas in detail for the cases ${\tt (AB)}^\omega$
and ${\tt (ABBA)}^\omega$, leaving the other cases to the reader.

To assert that the pattern ${\tt (AB)}^\omega$ describes the
occurrences of $x$ and $\overline{x}$ in $\bf t$, we create
first-order logic formulas asserting each of the following:
\begin{itemize}
\item[(a)] 
one of the two
words $x$ and $\overline{x}$ occurs at positions $j, k$ for $j<k$, and
furthermore that neither of the two words occurs
at any position between $j$ and $k$.   This ensures
that $j$ and $k$ mark the starting position of two {\it consecutive\/}
factors chosen from $\{ x, \overline{x} \}$.

\item[(b)] if $j,k$ are two
positions as in (a), then one must be the position of
$x$, while the other is the position of $\overline{x}$.
This forces the consecutive occurrences of the factors to
alternate, and hence form either the pattern ${\tt (AB)}^\omega$
or ${\tt (BA)}^\omega$.

\item[(c)] the first occurrence
of either $x$ or $\overline{x}$ in $\bf t$ is actually an
occurrence of $x$.  This, together with (b),
forces the pattern to be of the form ${\tt (AB)}^\omega$.
\end{itemize}

We specify the word $x$ by giving one of its occurrences, that is,
two integers $i, n$ such that
$x= {\bf t}[i..i+n-1]$.  

Here is the meaning of each logical formula we now define.  

\begin{itemize}
\item $\feq(i,j,n)$ asserts that ${\bf t}[i..i+n-1]={\bf t}[j..j+n-1]$;
\item $\feqc(i,j,n)$ asserts that ${\bf t}[i..i+n-1] = \overline{{\bf t}[j..j+n-1]}$;
\item $\either(i,j,n)$ asserts that either ${\bf t}[i..i+n-1]={\bf t}[j..j+n-1]$
or ${\bf t}[i..i+n-1] = \overline{{\bf t}[j..j+n-1]}$;
\item $\consec(i,j,k,n)$ asserts that $j<k$ and
${\bf t}[j..j+n-1] \in \{x, \overline{x} \}$ and
${\bf t}[k..k+n-1] \in \{x, \overline{x} \}$, where
$x = {\bf t}[i..i+n-1]$,
but no factor starting in between these two equals
either $x$ or $\overline{x}$.
\item $\ab(i,j,k,n)$ asserts 
${\bf t}[j..j+n-1] = x$ and
${\bf t}[k..k+n-1] = \overline{x}$,
for $x = {\bf t}[i..i+n-1]$.
\item $\first(i,j,n)$ asserts that ${\bf t}[j..j+n-1]$ is 
the first occurrence of the factor ${\bf t}[i..i+n-1]$ in $\bf t$;
\item $\afirst(i,n)$ asserts that the first occurrence of the
factor $x = {\bf t}[i..i+n-1]$ precedes the first occurrence
of $\overline{x}$ in $\bf t$;
\item $\abpat(i,n)$ asserts that the intertwining sequence of
$x = {\bf t}[i..i+n-1]$ and $\overline{x}$ is ${\tt (AB)}^\omega$.
\item $\bapat(i,n)$ asserts that the intertwining sequence of
$x = {\bf t}[i..i+n-1]$ and $\overline{x}$ is ${\tt (BA)}^\omega$.
\end{itemize}

\begin{align*}
\feq(i,j,n) &:= \forall k \ (k<n) \implies {\bf t}[i+k] = {\bf t}[j+k] \\
\feqc(i,j,n) &:= \forall k \ (k<n) \implies {\bf t}[i+k] \not= {\bf t}[j+k] \\
\either(i,j,n) &:= \feq(i,j,n) \, \vee\, \feqc(i,j,n) \\
\consec(i,j,k,n) &:= (j<k) \, \wedge\, \either(i,j,n) \, \wedge\,
\either(i,k,n) \, \wedge\, \forall l \ (j<l \, \wedge \, l<k) \\
 & \quad \implies \neg\either(i,l,n) \\
\ab(i,j,k,n) &:= \feq(i,j,n) \, \wedge\, \feqc(i,k,n) \\
\first(i,j,n) &:=  \feq(i,j,n) \, \wedge\, \forall k \ (k<j) \implies \neg\feq(i,k,n)\\
\afirst(i,n) &:= \forall j,k\ (\first(i,j,n) \, \wedge\, \feqc(i,k,n)) 
	\implies j<k\\
\abpat(i,n) &:= (n>0) \, \wedge\, \afirst(i,n) \, \wedge\, \forall j,k\ \consec(i,j,k,n) 
	\implies (\ab(i,j,k,n) \, \vee \, \ab(i,k,j,n)) \\
\bapat(i,n) &:= (n>0) \, \wedge\,  (\neg\afirst(i,n)) \, \wedge\, \forall j,k\ \consec(i,j,k,n) 
	\implies (\ab(i,j,k,n) \, \vee \, \ab(i,k,j,n)) \\
\end{align*}

The translation into {\tt Walnut} is
\begin{verbatim}
def feq "Ak (k<n) => T[i+k]=T[j+k]":
def feqc "Ak (k<n) => T[i+k]!=T[j+k]":
def either "$feq(i,j,n)|$feqc(i,j,n)":
def consec "j<k & $either(i,j,n) & $either(i,k,n) & Al (j<l & l<k) 
   => ~$either(i,l,n)":
def ab "$feq(i,j,n) & $feqc(i,k,n)":
def first "$feq(i,j,n) & Ak (k<j) => ~$feq(i,k,n)":
def afirst "Aj,k ($first(i,j,n) & $feqc(i,k,n)) => j<k":
def abpat "(n>0) & $afirst(i,n) & Aj,k $consec(i,j,k,n) => 
   ($ab(i,j,k,n)|$ab(i,k,j,n))":
def bapat "(n>0) & (~$afirst(i,n)) & Aj,k $consec(i,j,k,n) => 
   ($ab(i,j,k,n)|$ab(i,k,j,n))":
\end{verbatim}

We now do the same thing for the patterns ${\tt (ABBA)}^\omega$
and ${\tt (BAAB)}^\omega$.   The one complication is that
to assert that the intertwining sequence is
${\tt (ABBA)}^\omega$,
for example, then one must assert that 
\begin{itemize}
\item[(a)] the first two occurrences of either $x$ or $\overline{x}$ 
form the pattern {\tt AB};
\item[(b)] three consecutive occurrences of either $x$ or $\overline{x}$
in $\bf t$ must form the pattern {\tt ABB} or
{\tt BBA} or {\tt BAA} or {\tt AAB}.
\end{itemize}
An easy induction now shows that the overall
pattern can only be ${\tt (ABBA)}^\omega$.

We give only the {\tt Walnut} commands for checking this.
\begin{verbatim}
def firstc "$feqc(i,j,n) & Ak (k<j) => ~$feqc(i,k,n)":
# j is the first occurrence of the complement of t[i..i+n-1]
def abfirst "Aj,k ($first(i,j,n) & $firstc(i,k,n)) => (j<k & Al (j<l & l<k) 
   => ~$either(i,l,n))":
# the first two occurrences of t[i..i+n-1] or its complement are of the form AB
def abb "$feq(i,j,n) & $feqc(i,k,n) & $feqc(i,l,n)":
def bba "$feqc(i,j,n) & $feqc(i,k,n) & $feq(i,l,n)":
def baa "$feqc(i,j,n) & $feq(i,k,n) & $feq(i,l,n)":
def aab "$feq(i,j,n) & $feq(i,k,n) & $feqc(i,l,n)":
def abbapat "(n>0) & $abfirst(i,n) & Aj,k,l ($consec(i,j,k,n) & $consec(j,k,l,n)) 
   => ($abb(i,j,k,l,n) | $bba(i,j,k,l,n) | $baa(i,j,k,l,n) | $aab(i,j,k,l,n))":
def baabpat "(n>0) & (~$abfirst(i,n)) & Aj,k,l ($consec(i,j,k,n) & $consec(j,k,l,n)) 
   => ($baa(i,j,k,l,n) | $aab(i,j,k,l,n) | $abb(i,j,k,l,n) | $bba(i,j,k,l,n))":
\end{verbatim}

Now we are ready to finish the proof of the theorem.  First we
check that
\begin{align*}
I( {\tt 11} ) &= I( {\bf t}[1..2]) = {\tt (AB)}^\omega \\
I( {\tt 00} ) &= I( {\bf t}[5..6]) = {\tt (BA)}^\omega \\
I( {\tt 101} ) &= I( {\bf t}[2..4]) = {\tt (ABBA)}^\omega \\
I( {\tt 010} ) &= I( {\bf t}[3..5]) = {\tt (BAAB)}^\omega ,
\end{align*}
as follows:
\begin{verbatim}
eval alloccur "$abpat(1,2) & $bapat(5,2) & $abbapat(2,3) & $baabpat(3,3)":
\end{verbatim}
and {\tt Walnut} returns {\tt TRUE}.

Next, we check that for all $i$ and all $n \geq 2$, the intertwining
sequence of 
${\bf t}[i..i+n-1]$ is either ${\tt (AB)}^\omega$,
${\tt (BA)}^\omega$,
${\tt (ABBA)}^\omega$,
or ${\tt (BAAB)}^\omega$.
\begin{verbatim}
eval checkeach "Ai,n (n>=2) => ($abpat(i,n)|$bapat(i,n)|$abbapat(i,n)|
   $baabpat(i,n))":
\end{verbatim}
and {\tt Walnut} returns {\tt TRUE}.

This completes the proof.
\end{proof}

For factors of length $n = 2$, the only intertwining patterns that occur
are ${\tt (AB)}^\omega$ and ${\tt (BA)}^\omega$.
However, for each $n \geq 3$, we can prove that each of the
four patterns actually occurs.

\begin{theorem}
For every $n \geq 3$, and each of the four
patterns $p \in \{ {\tt AB, BA, ABBA, BAAB} \}$,
there is a length-$n$ factor $x$ of $\bf t$
whose occurrence pattern is $p^\omega$.
\end{theorem}

\begin{proof}
We use {\tt Walnut} with the command
\begin{verbatim}
eval checklen "An (n>=3) => Ei,j,k,l $abpat(i,n) & $bapat(j,n) & $abbapat(k,n) 
   & $baabpat(l,n)":
\end{verbatim}
and {\tt Walnut} returns {\tt TRUE}.
\end{proof}

\section{Automata}

We now give the four automata for the four cases of intertwining sequence.
Each automaton takes, as input, the base-$2$ expansions of $i$ and
$n$ in parallel, starting with the most significant bit, and accepts
if and only if ${\bf t}[i..i+n-1]$ has the specified intertwining
sequence.

\begin{figure}[H]
\begin{center}
\includegraphics[width=6.5in]{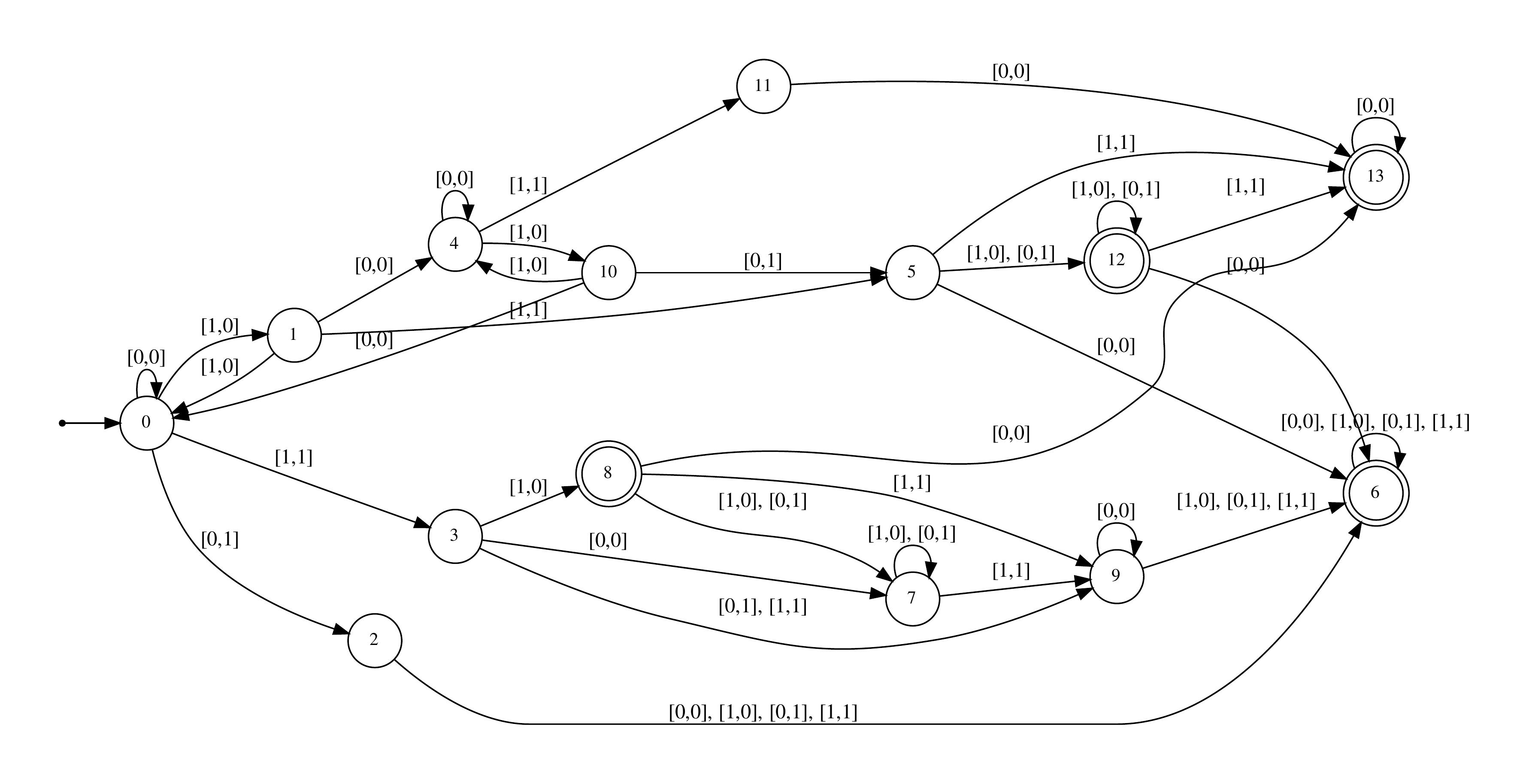}
\end{center}
\caption{Automaton for $(i,n)$ such that $I({\bf t}[i..i+n-1]) = {\tt (AB)}^\omega$.}
\end{figure}

\begin{figure}[H]
\begin{center}
\includegraphics[width=6.5in]{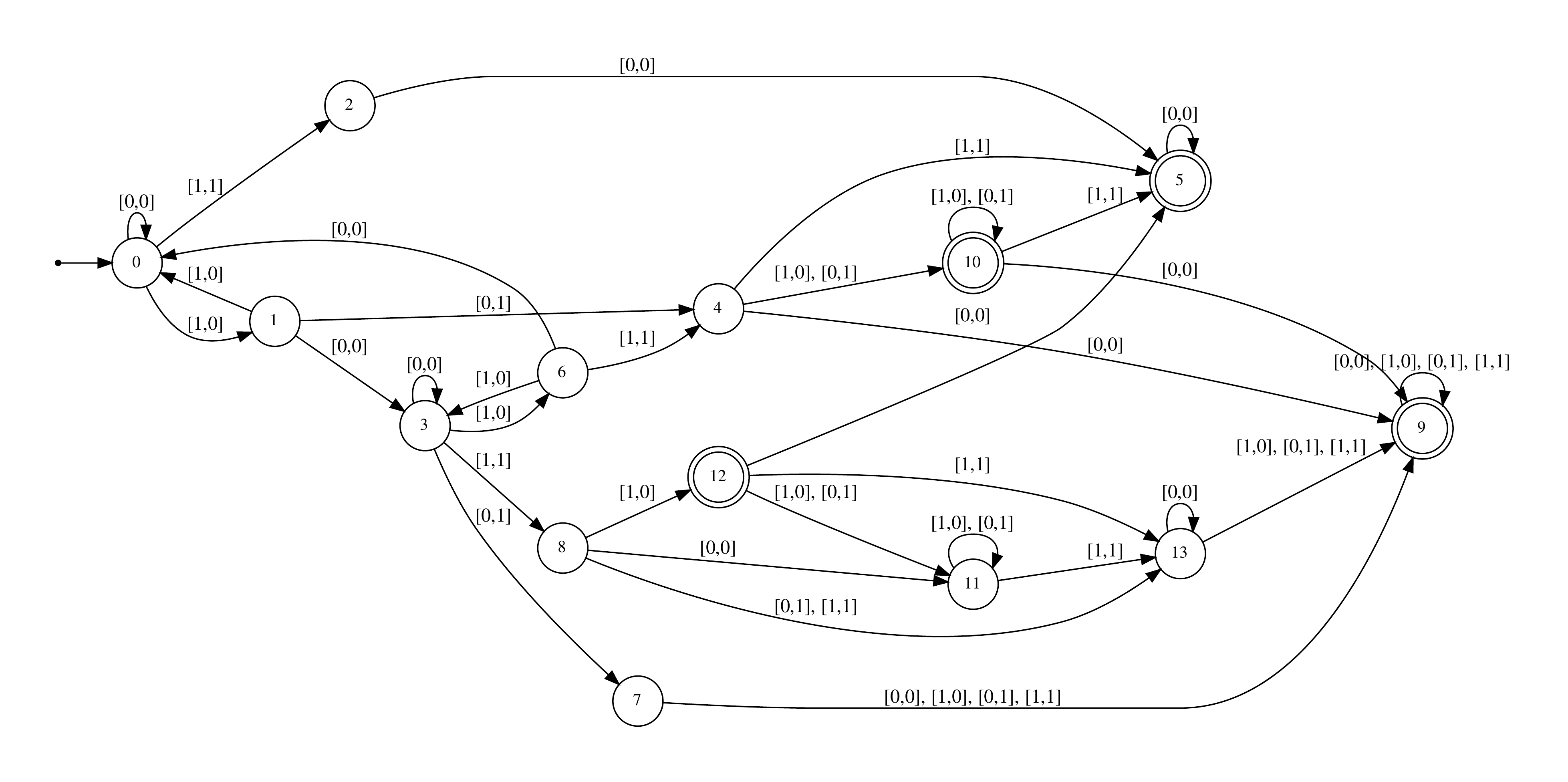}
\end{center}
\caption{Automaton for $(i,n)$ such that $I({\bf t}[i..i+n-1]) = {\tt (BA)}^\omega$.}
\end{figure}

\begin{figure}[H]
\begin{center}
\includegraphics[width=6.5in]{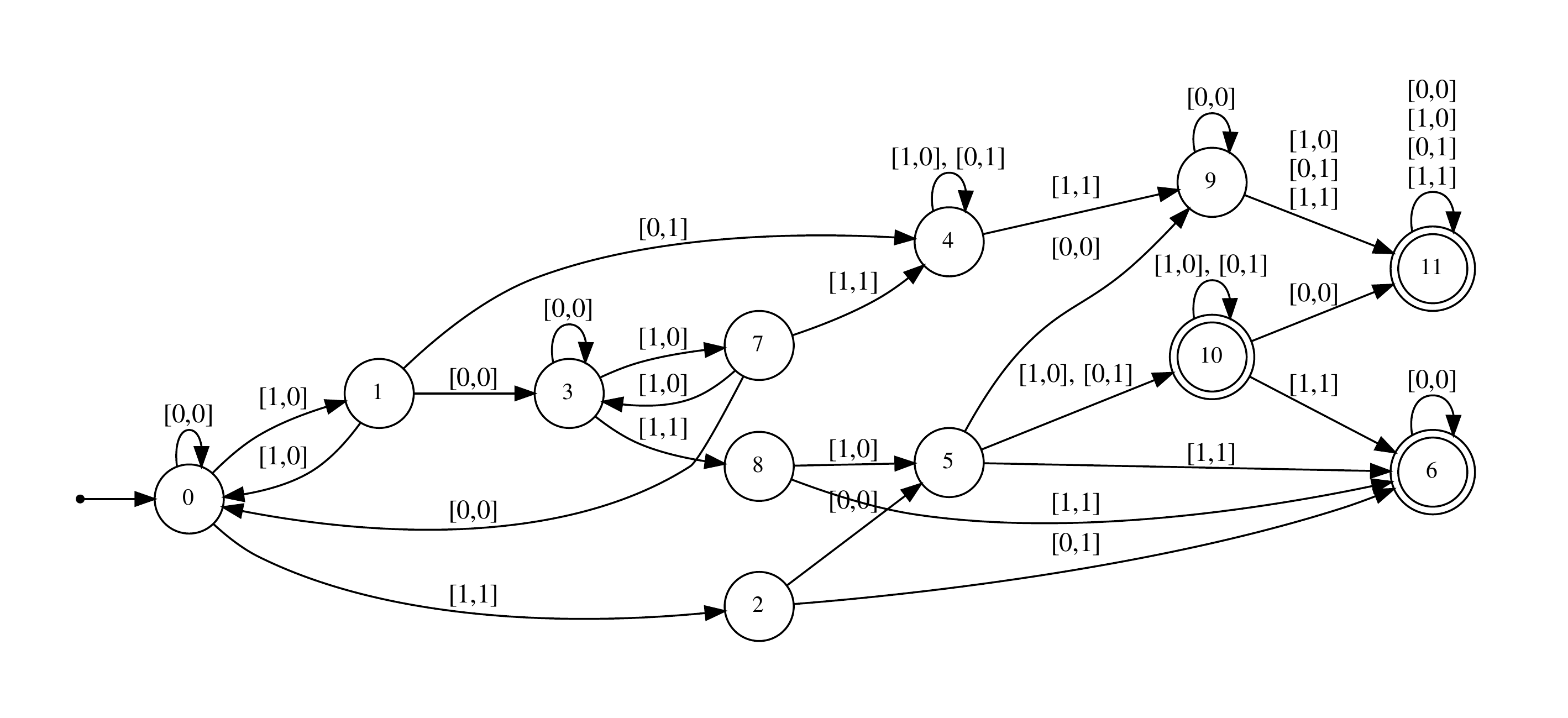}
\end{center}
\caption{Automaton for $(i,n)$ such that $I({\bf t}[i..i+n-1]) = {\tt (ABBA)}^\omega$.}
\end{figure}

\begin{figure}[H]
\begin{center}
\includegraphics[width=6.5in]{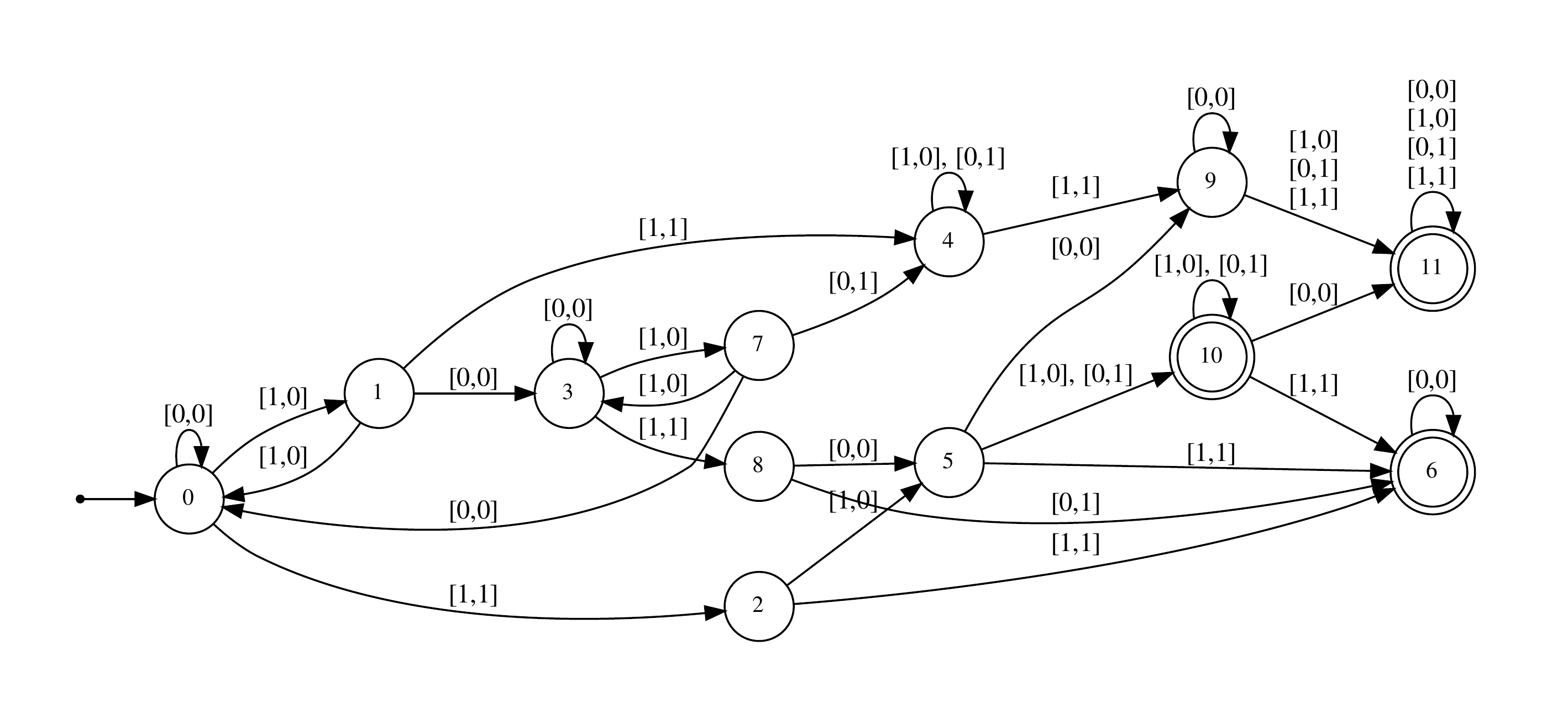}
\end{center}
\caption{Automaton for $(i,n)$ such that $I({\bf t}[i..i+n-1]) = {\tt (BAAB)}^\omega$.}
\end{figure}

\begin{remark}
We can combine these automata, as in \cite{Shallit&Zarifi:2019},
to get a single DFAO that, on input $(i,n)$, computes which of the
six possibilities in Theorem~\ref{main} occurs.
However, the resulting automaton has 30 states and is rather 
complicated in appearance, so we don't give it here.
\end{remark}

\section{Number of factors of each type}

We now determine the number of length-$n$ factors of each of the four types.
It is easy to see that there is a 1-1 correspondence between
length-$n$ factors where the intertwining sequence is
${\tt (AB)}^\omega$ and those where the intertwining sequence
is ${\tt (BA)}^\omega$, and similarly for those with
intertwining sequence ${\tt (ABBA)}^\omega$
and ${\tt (BAAB)}^\omega$.  Thus it suffices to just handle
${\tt (AB)}^\omega$ and ${\tt (ABBA)}^\omega$.

Let $f(n)$ be the number of length-$n$ factors $x$ of $\bf t$
where $I(x) = {\tt (AB)}^\omega$, and let
$g(n)$  be the number of length-$n$ factors $x$ of $\bf t$
where $I(x) = {\tt (ABBA)}^\omega$.    Here is a table of the
first few values of these functions:
\begin{table}[H]
\begin{center}
\begin{tabular}{c|cccccccccccccccc}
$n$ & 1 & 2 & 3 & 4 & 5 & 6 & 7 & 8 & 9 & 10 & 11 & 12 & 13 & 14 & 15\\
\hline
$f(n)$ & 0 & 2 & 2 & 4 & 4 & 6 & 8 & 8 & 8 & 10 & 12 & 14 & 16 & 16 & 16 \\
$g(n)$ & 0 & 0 & 1 & 1 & 2 & 2 & 2 & 3 & 4 & 4 & 4 & 4 & 4 & 5 & 6 
\end{tabular}
\end{center}
\end{table}

It turns out that both of these
are expressible in terms of known sequences.

\begin{theorem}
We have
\begin{align*}
f(n+1) &= 2 \cdot \seqnum{A006165} (n) \quad \text{for $n \geq 1$}; \\
g(n+1) &= \seqnum{A060973}(n) \quad \text{for $n \geq 0$},
\end{align*}
where the sequence numbers refer to sequences in the
{\it On-Line Encyclopedia of Integer Sequences} (OEIS) \cite{Sloane}.
\end{theorem}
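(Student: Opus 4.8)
The plan is to recast both identities as equalities between $2$-regular (equivalently, $\mathbb{Z}$-rational) sequences, each of which can be produced or verified mechanically, so that the two claims reduce to finite computations. By the $1$-$1$ correspondences already noted in the text, it suffices to treat $f$ and $g$. First I would express these two counting functions inside the same {\tt Walnut} framework used for Theorem~\ref{main}. Since $f$ and $g$ count \emph{distinct} factors rather than occurrences, I would count only the first occurrence of each factor: the predicate $\abpat(i,n) \wedge \forall j\,((j<i)\implies\neg\feq(i,j,n))$ selects exactly the least position of each length-$n$ factor whose intertwining sequence is ${\tt (AB)}^\omega$, and analogously with \texttt{abbapat} in place of $\abpat$ for the ${\tt (ABBA)}^\omega$ factors. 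Because $\abpat$ and \texttt{abbapat} are properties of the factor itself — the intertwining sequence depends only on the content $\mathbf{t}[i..i+n-1]$, not on which occurrence we start from — these predicates take the same truth value at every occurrence of a given factor, so for each $n$ the selected set is finite and contains precisely one position per distinct factor of the stated type. Hence $f(n)=\#\{i:\abpat(i,n)\wedge\text{(first occurrence)}\}$ and $g(n)=\#\{i:\texttt{abbapat}(i,n)\wedge\text{(first occurrence)}\}$, and I would apply {\tt Walnut}'s counting facility to these predicates to obtain explicit linear representations (a row vector, two matrices indexed by the input bits, and a column vector) for $f$ and $g$ as functions of the base-$2$ expansion of $n$.

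Second, I would produce linear representations for the two target sequences. From their initial values one checks that both \seqnum{A006165} and \seqnum{A060973} satisfy the Stern-like recurrence $a(2n)=2a(n)$ and $a(2n+1)=a(n)+a(n+1)$ (the even case for $n\ge2$, the odd case for $n\ge1$), differing only in their initial data: \seqnum{A006165} has $a(1)=a(2)=1$, while \seqnum{A060973} has $a(0)=a(1)=0$ and $a(2)=1$. Tracking the pair $(a(n),a(n+1))^{T}$ then shows both sequences are $2$-regular, with common transition matrices $\bigl(\begin{smallmatrix}2&0\\1&1\end{smallmatrix}\bigr)$ and $\bigl(\begin{smallmatrix}1&1\\0&2\end{smallmatrix}\bigr)$ for the bits $0$ and $1$ respectively; the two sequences differ only in the initial vector together with the finitely many exceptional small values where the recurrence does not yet apply. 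This gives explicit linear representations for $2\cdot\seqnum{A006165}$ and for \seqnum{A060973} (up to the usual reversal of reading direction, which is harmless).

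Third, I would verify the two identities by the standard decision procedure for equality of $\mathbb{Z}$-rational sequences. For the first, I would form the linear representation of the difference $f(n+1)-2\,\seqnum{A006165}(n)$ as a direct sum of the two representations, with the shift $n\mapsto n+1$ absorbed into the automaton, and check that it computes the identically-zero sequence; likewise for $g(n+1)-\seqnum{A060973}(n)$. Equivalence of such representations is decidable over $\mathbb{Q}$: a representation of dimension $d$ yields the zero sequence iff the vectors obtained by reading all inputs of length less than $d$ already span the reachable space and annihilate the output vector, so each identity reduces to a finite linear-algebra computation (equivalently, one minimizes the representation of the difference and confirms it is trivial). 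This can be carried out by {\tt Walnut} together with a short auxiliary minimization, and matching the finitely many initial values disposes of the small-index discrepancies created by the exceptional boundary terms in the recurrences.

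I expect the main obstacle to be the bookkeeping that guarantees we are genuinely counting distinct factors rather than occurrences — i.e.\ the correctness of the first-occurrence predicate — and the correct incorporation of the index shift $n\mapsto n+1$ and of the exceptional initial values of \seqnum{A006165} and \seqnum{A060973} into the linear representations; once the representations are in hand, the equality test is routine and finite. A secondary point, already handled by the pair-tracking above, is confirming the $2$-regularity of the two OEIS sequences directly from their recurrences, which is what licenses the whole reduction.
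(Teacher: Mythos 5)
Your proposal is correct and follows essentially the same route as the paper: you count factors via a first-occurrence predicate conjoined with $\abpat$ (resp.\ the {\tt ABBA} predicate) to get linear representations from {\tt Walnut}, build linear representations for \seqnum{A006165} and \seqnum{A060973} from their divide-and-conquer recurrences (the paper encodes the exceptional small values with Iverson-bracket correction terms rather than as separate boundary cases, but this is the same computation), and then minimize the representation of the difference to verify it vanishes for the stated range of $n$. No substantive divergence from the paper's argument.
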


\begin{proof}
Let us start with ${\tt (AB)}^\omega$.  Using the {\tt Walnut} commands
\begin{verbatim}
def firstocc "Aj (j<i) => ~$feq(i,j,n)":
eval mab n "$firstocc(i,n+1) & $abpat(i,n+1)":
\end{verbatim}
we can construct the linear representation for $f(n+1)$.  It is
\begin{align*}
v_1 &= \left[ \begin{smallarray}{cccc}
1&0&0&0
	\end{smallarray} \right] 
&
\gamma_1({\tt 0}) & = \left[ \begin{smallarray}{cccc}
1&0&0&0\\
0&2&0&0\\
0&0&0&2\\
0&0&0&1
	\end{smallarray} \right]
&
\gamma_1({\tt 1}) &= \left[\begin{smallarray}{cccc} 
0&1&1&0\\
0&2&0&0\\
0&2&0&0\\
0&1&0&1
	\end{smallarray}\right]
&
w_1 &= \left[ \begin{smallarray}{c}
0\\
1\\
1\\
0
\end{smallarray} 
\right] .
\end{align*}
On the other hand, from the known relations for $\seqnum{A006165}(n)$,
namely
\begin{align*}
\seqnum{A006165}(2n) &= 2\seqnum{A006165}(n) - [n=0] - [n=1] \\
\seqnum{A006165}(2n+1) &= \seqnum{A006165}(n+1) + \seqnum{A006165}(n) - [n=0]
\end{align*}
from which we can compute its linear representation:
\begin{align*}
v_2 &= \left[ \begin{smallarray}{cccc}
1&1&1&0
	\end{smallarray} \right] 
&
\gamma_2({\tt 0}) & = \left[ \begin{smallarray}{cccc}
2&1&0&0\\
0&1&0&0\\
-1&-1&1&0\\
-1&0&0&0
	\end{smallarray} \right]
&
\gamma_2({\tt 1}) &= \left[\begin{smallarray}{cccc} 
1&0&0&0\\
1&2&0&0\\
-1&-1&0&1\\
0&0&0&0
	\end{smallarray}\right]
&
w_2 &= \left[ \begin{smallarray}{c}
1\\
0\\
0\\
0
\end{smallarray} 
\right] .
\end{align*}
Here we are using the Iverson bracket, where (for example)
the expression $[n=0]$ evaluates to $1$ if $n=0$ and $0$ otherwise.

From these two linear representations, we can easily compute
the linear representation for 
$f(n+1) - 2 \cdot \seqnum{A006165} (n)$ and then minimize
it.  When we do so, we get a linear representation of rank $1$
that evaluates to the function $-2[n=0]$, so indeed
$f(n+1) = 2 \cdot \seqnum{A006165} (n)$ for all $n \geq 1$.

We can do the same thing for $g(n+1)$, using the {\tt Walnut}
command:
\begin{verbatim}
eval mabba n "$firstocc(i,n+1) & $abbapat(i,n+1)":
\end{verbatim}

\begin{align*}
v_3 &= \left[ \begin{smallarray}{cccccc}
1&1&0&0&0&0
	\end{smallarray} \right] 
&
\gamma_3({\tt 0}) & = \left[ \begin{smallarray}{cccccc}
1&1&0&0&0&0\\
0&0&0&0&0&0\\
0&0&0&0&1&0\\
0&0&0&1&0&0\\
0&0&0&0&1&1\\
0&0&0&0&0&2
	\end{smallarray} \right]
&
\gamma_3({\tt 1}) &= \left[\begin{smallarray}{cccccc} 
0&0&1&0&0&0\\
0&0&0&1&0&0\\
0&0&0&0&0&0\\
0&0&0&1&0&1\\
0&0&0&0&1&0\\
0&0&0&0&0&2
	\end{smallarray}\right]
&
w_3 &= \left[ \begin{smallarray}{c}
0\\
0\\
0\\
0\\
1\\
1
\end{smallarray} 
\right] .
\end{align*}

On the other hand, from the known relations for $\seqnum{A060973}(n)$,
namely
\begin{align*}
\seqnum{A060973}(2n) &= 2\seqnum{A060973}(n) + [n=1] \\
\seqnum{A060973}(2n+1) &= \seqnum{A060973}(n+1) + \seqnum{A060973}(n) 
\end{align*}
from which we can compute its linear representation:
\begin{align*}
v_4 &= \left[ \begin{smallarray}{cccc}
0&0&1&0
	\end{smallarray} \right] 
&
\gamma_4({\tt 0}) & = \left[ \begin{smallarray}{cccc}
2&1&0&0\\
0&1&0&0\\
0&0&1&0\\
1&0&0&0
	\end{smallarray} \right]
&
\gamma_4({\tt 1}) &= \left[\begin{smallarray}{cccc} 
1&0&0&0\\
1&2&0&0\\
0&1&0&1\\
0&0&0&0
	\end{smallarray}\right]
&
w_4 &= \left[ \begin{smallarray}{c}
1\\
0\\
0\\
0
\end{smallarray} 
\right] .
\end{align*}

Once again we can compute the linear representation for
$g(n+1) - \seqnum{A060973} (n)$ and minimize it.
When we do so, we get a linear representation of rank $0$,
computing the constant function $0$.
\end{proof}

Finally, using the known expressions for the two sequences
\seqnum{A006165} and \seqnum{A060973}, we arrive at the following
result:
\begin{corollary}
For $n \geq 2$ we have
$$ f(n) = \begin{cases}
	2^k,  & \text{if $3 \cdot 2^{k-2} < n \leq 2^k + 1$}; \\
	2n-2^k-2, & \text{if $2^k + 1 < n \leq 3 \cdot 2^{k-1}$}.
	\end{cases}
$$
For $n \geq 3$ we have
$$ g(n) = \begin{cases}
	2^{k-1}, & \text{if $2^k + 1 < n \leq 3 \cdot 2^{k-1} + 1$}; \\
	n-2^{k-1}-1, & \text{if $3 \cdot 2^{k-1} + 1 < n \leq 2^k + 1$}.
	\end{cases} $$
\end{corollary}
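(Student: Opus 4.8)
The plan is to treat the Corollary as a purely arithmetic consequence of the preceding Theorem, which has already identified $f(n+1)$ with $2\cdot\seqnum{A006165}(n)$ and $g(n+1)$ with $\seqnum{A060973}(n)$. No further reasoning about $\mathbf{t}$ is needed: it remains only to obtain explicit closed forms for the two OEIS sequences and then apply the index shift $m\mapsto m-1$ forced by the Theorem.

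First I would derive an explicit piecewise formula for $\seqnum{A006165}(m)$ from the two recurrences recorded in the proof of the Theorem, $\seqnum{A006165}(2m)=2\,\seqnum{A006165}(m)-[m=0]-[m=1]$ and $\seqnum{A006165}(2m+1)=\seqnum{A006165}(m+1)+\seqnum{A006165}(m)-[m=0]$. The statement to prove by strong induction on $m$ is that $\seqnum{A006165}$ is a \emph{linear-then-constant} sawtooth on each dyadic block: for $k\ge 1$ and $2^{k}<m\le 2^{k+1}$ one has $\seqnum{A006165}(m)=m-2^{k-1}$ when $m\le 3\cdot 2^{k-1}$, and $\seqnum{A006165}(m)=2^{k}$ when $m\ge 3\cdot 2^{k-1}$. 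The inductive step splits on the parity of $m$, each branch referring back to values in the block of index $k-1$; the two Iverson corrections vanish once $m\ge 4$, so they enter only through base cases, which I would check directly against the table. I would then run the identical argument for $\seqnum{A060973}(m)$, using $\seqnum{A060973}(2m)=2\,\seqnum{A060973}(m)+[m=1]$, $\seqnum{A060973}(2m+1)=\seqnum{A060973}(m+1)+\seqnum{A060973}(m)$, and the base value $\seqnum{A060973}(1)=0$ (forced by $g(2)=0$); here the outcome is the reflected, \emph{constant-then-linear} sawtooth, equal to $2^{k-1}$ on $2^{k}\le m\le 3\cdot 2^{k-1}$ and to $m-2^{k}$ on $3\cdot 2^{k-1}\le m\le 2^{k+1}$.

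With both closed forms established, the Corollary follows by substitution. For $f$ I would set $m=n-1$, multiply the $\seqnum{A006165}$ formula by $2$, and rewrite each interval endpoint in terms of $n$; the constant piece $\seqnum{A006165}(m)=2^{j}$ becomes $f(n)=2^{j+1}$, which after relabelling $k=j+1$ gives the first case, while the linear piece gives $f(n)=2n-2^{k}-2$. The computation for $g$ is the same, starting from $g(n)=\seqnum{A060973}(n-1)$. Throughout I would confirm that consecutive dyadic blocks tile $\{n\ge 2\}$ (respectively $\{n\ge 3\}$) without gaps or overlaps, and that the two pieces agree at each breakpoint, so that assigning a boundary value to either neighbouring case is harmless.

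The main difficulty is bookkeeping rather than conceptual: one must match the even- and odd-argument branches of each recurrence exactly at the breakpoints $m=2^{k}$ and $m=3\cdot 2^{k-1}$, and then track how those breakpoints move under $m=n-1$ so that the resulting inequalities on $n$ align with the stated cases --- precisely the spot where an off-by-one error or a mislabelled exponent is easy to introduce. As a safeguard, and as an independent alternative to the hand induction, I would evaluate the rank-$4$ linear representations for $\seqnum{A006165}$ and $\seqnum{A060973}$ exhibited in the proof of the Theorem against the proposed formulas over several complete dyadic blocks, which pins down the interval endpoints unambiguously.
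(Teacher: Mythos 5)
Your proposal is correct and takes essentially the same route as the paper, which gives no argument beyond invoking ``the known expressions'' for \seqnum{A006165} and \seqnum{A060973} and shifting the index; your inductions from the recurrences merely supply the closed forms the paper takes for granted. One substantive remark: carrying out your substitution $m=n-1$ for $g$ carefully yields the second case on the interval $3\cdot 2^{k-2}+1 < n \leq 2^k+1$ (with value $n-2^{k-1}-1$), whereas the interval $3\cdot 2^{k-1}+1 < n \leq 2^k+1$ printed in the Corollary is empty --- an exponent typo that exactly the breakpoint bookkeeping you flag as the main hazard (e.g., checking $g(14)=5$) would detect.
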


\end{document}